\def\BibTeX{{\rm B\kern-.05em{\sc i\kern-.025em b}\kern-.08em
    T\kern-.1667em\lower.7ex\hbox{E}\kern-.125emX}}
\newcommand*{\rom}[1]{\expandafter\@slowromancap\romannumeral #1@}
\newtheorem{theorem}{Theorem}
\def \v x{\bm x}
\def \v x{\bm X}
\renewcommand{\v}[1]{\ensuremath{\boldsymbol{#1}}}
\newcommand{\recht}[1]{\operatorname{#1}}
\begin{document}

\title{Privacy-Preserving Distributed Maximum Consensus Without Accuracy Loss}

\author{\IEEEauthorblockN{Wenrui Yu}
\IEEEauthorblockA{\textit{CISPA Helmholtz Center}\\
\textit{for Information Security}\\
Germany \\
wenrui.yu@cispa.de}
\and
\IEEEauthorblockN{Richard Heusdens}
\IEEEauthorblockA{\textit{Netherlands Defence Academy}\\
\textit{Delft University of Technology}\\
the Netherlands \\
r.heusdens@tudelft.nl}
\and
\IEEEauthorblockN{Jun Pang}
\IEEEauthorblockA{
\textit{University of Luxembourg}\\
Luxembourg\\
jun.pang@uni.lu}
\and
\IEEEauthorblockN{Qiongxiu Li}
\IEEEauthorblockA{
\textit{Aalborg University}\\
Denmark \\
qili@es.aau.dk}
}

%
\maketitle
\raggedbottom
\begin{sloppy}
\addtolength{\abovedisplayskip}{-1.0mm}
\addtolength{\belowdisplayskip}{-1.0mm}

\setlength{\lineskiplimit}{0pt}
\setlength{\lineskip}{0pt}
\setlength{\abovedisplayskip}{2pt}
\setlength{\belowdisplayskip}{2pt}
\setlength{\abovedisplayshortskip}{2pt}
\setlength{\belowdisplayshortskip}{2pt} 
\setlength{\belowcaptionskip}{-6pt}

\begin{abstract}
In distributed networks, calculating the maximum element is a fundamental task in data analysis, known as the distributed maximum consensus problem. However, the sensitive nature of the data involved makes privacy protection essential. Despite its importance, privacy in distributed maximum consensus has received limited attention in the literature. Traditional privacy-preserving methods typically add noise to updates, degrading the accuracy of the final result. To overcome these limitations, we propose a novel distributed optimization-based approach that preserves privacy without sacrificing accuracy. Our method introduces virtual nodes to form an augmented graph and leverages a carefully designed initialization process to ensure the privacy of honest participants, even when all their neighboring nodes are dishonest. Through a comprehensive information-theoretical analysis, we derive a sufficient condition to protect private data against both passive and eavesdropping adversaries. Extensive experiments validate the effectiveness of our approach, demonstrating that it not only preserves perfect privacy but also maintains accuracy, outperforming existing noise-based methods that typically suffer from accuracy loss.
\end{abstract}
\begin{IEEEkeywords}
maximum consensus, distributed optimization, privacy, information-theoretical analysis, adversary 
\end{IEEEkeywords}
\section{Introduction}
\label{sec:intro}

Consensus algorithms are designed to facilitate network-wide agreement through localized computations and the exchange of information among neighboring nodes. 
These algorithms represent a fundamental challenge in distributed optimization and have found widespread applications. 
Typical examples include averaging \cite{francca2020distributed,zhang2014asynchronous}, maximum/minimum \cite{deplano2023unified}, and median \cite{deplano2023unified} consensus.
However, since information sharing is an essential process in solving consensus problems, it raises severe privacy concerns. 

Common privacy preservation techniques in consensus problems include differential privacy (DP) \cite{huang2015differentially,nozari2018differentially,zhang2016dynamic,zhang2018improving,xiong2020privacy}, secure multi-party computation (SMPC) \cite{gupta2017privacy,li2019privacyA,tjell2020privacy,tjell2019privacy,xu2015secure,li2019privacyS,shoukry2016privacy,zhang2019admm},  subspace perturbation \cite{Jane2020ICASSP,Jane2020LS,li2020privacy} and variants of it \cite{jordan2024,li2022communication,li2023adaptive}. DP achieves a level of protection by adding zero-mean noise, thereby obfuscating the private data. However, this approach involves a tradeoff between utility and privacy; higher levels of noise lead to better privacy but result in reduced accuracy.
SMPC techniques, such as secret sharing \cite{Cramer2015}, often incur communication overhead due to the need to split and distribute the message for transmission.
Subspace perturbation, based on distributed optimizers such as the Alternating Direction Method of Multipliers (ADMM) \cite{boyd2011distributed}  or the Primal-Dual Method of Multipliers (PDMM) \cite{zhang2017distributed,heusdens2024distributed}, operates by introducing noise due to proper initialization of the optimization variables. Since algorithms are guaranteed to converge regardless of the initial conditions, the algorithm accuracy remains uncompromised. Consequently, it allows for privacy preservation while maintaining the integrity of the original data.

While average consensus has been extensively studied, the issue of privacy leakage in nonlinear consensus problems, such as maximum/minimum and median consensus, has received relatively little attention. The investigation can advance the understanding of Byzantine robustness in distributed systems, such as federated learning \cite{pillutla2022robust}. A few works have attempted to address this concern. Wang et al. \cite{wang2018differentially} directly adds Gaussian noise to private data before broadcasting it to the network, while Venkategowda et al.\cite{venkategowda2020privacy,gratton2021privacy} employs DP within the ADMM framework by adding noise to the primal variable. Unfortunately, these approaches introduce a tradeoff between accuracy and privacy. To overcome it, subspace perturbation \cite{Jane2020ICASSP,Jane2020LS,li2020privacy} has emerged as an attractive alternative, bypassing it through the proper initialization of auxiliary variables. However, two challenges arise when applying it to maximum consensus. Firstly, this technique was originally proposed for problems with equality constraints, it is unclear whether it works effectively for inequality constraints. Secondly, it guarantees the privacy of an honest node only if it has at least one honest neighbor, which may not always be practical.

In this paper, we propose a simple yet effective approach to address these challenges. Our method not only extends subspace perturbation to inequality constrained scenarios within maximum consensus but also introduces additional virtual nodes (referred to as dummy nodes) to form an augmented graph to ensure the privacy of honest nodes, even in the extreme case that all their neighboring nodes are dishonest. Our approach is grounded by information-theoretical analysis, from which we derive a sufficient condition to ensure (asymptotically) perfect privacy of honest nodes. To our knowledge, it is the first instance of a privacy-preserving maximum consensus algorithm that incurs no accuracy loss while being supported by rigorous information-theoretical analysis. Extensive experimental results consolidate the effectiveness of our approach. 
 
\section{Preliminaries}
\label{sec:pre}

\subsection{Problem formulation}
We model our network by a graphical model $\mathcal{G} = (\mathcal{V},\mathcal{E})$, where $\mathcal{V} = \{1,\ldots,n\}$ represents the set of nodes/participants in the network and $\mathcal{E} \subseteq \mathcal{V} \times \mathcal{V}$ represents the set of undirected edges indicating the connections between the nodes (communication links). For each node $i$ we denote its set of neighbors as $\mathcal{N}_i = \{j \in{\cal V} \,|\, (i,j) \in \mathcal{E}\}$ and its degree by $d_i = |\mathcal{N}_i|$.  Let $s_i\in \mathbb{R}$ denote the data\footnote{For simplicity, we assume $s_i$ is a scalar, but results can easily be generalized to the vector case by considering element-wise maximum operations.} in node $i\in{\cal V}$. The privacy-preserving maximum consensus problem is to find the maximum value $s_{\max} = \max \{s_i: i\in \mathcal{V}\}$ in the network without revealing the local data $s_i$. To do so, we formulate the optimization challenge as a linear programming (LP) problem \cite{venkategowda2020privacy}
\begin{equation}
\begin{array}{ll} \text{minimize} & {\displaystyle \sum_{i\in {\cal V}} x_i,} \\\rule[4mm]{0mm}{0mm}
\text{subject to} & x_i -x_j = 0, \quad (i,j)\in \cal  E,
\\
&x_i\geq s_i, \quad  i\in \mathcal{V}.
\end{array}
\label{eq:lp}
\end{equation}
When $x$ is updated iteratively, we write $x^{(t)}$ to indicate the update of $x$ at the $t$th iteration. When we consider $x$ as a realization of a random variable, the corresponding random variable will be denoted by $X$ (corresponding capital).

\subsection{A/PDMM with linear equality and inequality constraints}

 Following \cite{heusdens2024distributed}, we consider the minimization of a separable convex function subject to a set of inequality constraints by
\begin{equation}
\begin{array}{ll} \text{minimize} & \displaystyle\sum_{i \in \mathcal{V}} f_i\left(x_i\right), \\\rule[3mm]{0mm}{0mm}
\text{subject to} & A_{i j} x_i+A_{j i} x_j \preceq b_{i j}, \quad(i, j) \in \mathcal{E},
\end{array}
\label{eq:problem}
\end{equation}
where $f_i$ are convex, closed and proper (CCP) functions and $\preceq$ (generalized inequality) represents element-wise inequality.

To solve \eqref{eq:problem}, the update equations of the so-called inequality constraint primal-dual method of multipliers (IEQ-PDMM) \cite{heusdens2024distributed} for node $i\in \mathcal{V}$ are given by
\begin{align}
    \nonumber &x_i^{(t+1)}=\arg \min _{x_i}\bigg(f_i(x)\\ 
    \nonumber & ~~~~~~~ +\sum_{j \in \mathcal{N}_i}\bigg(z_{i \mid j}^{(t)} A_{i j} x_i+\frac{c}{2}\left\|A_{i j} x_i-\frac{1}{2} b_{i j}\right\|^2\bigg)\bigg),\\
    \nonumber&y_{i \mid j}^{(t+1)}=z_{i \mid j}^{(t)}+2 c\left(A_{i j} x_i^{(t+1)}-\frac{1}{2} b_{i j}\right), \\
    &z_{i\mid j}^{(t+1)}=\left \{ \begin{array}{ll}
        (1-\theta)z_{i\mid j}^{(t)}+\theta y_{j\mid i}^{(t+1)}, &y_{i\mid j}^{(t+1)}+y_{j\mid i}^{(t+1)}>0,\\ 
    (1-\theta)z_{i\mid j}^{(t)}-\theta y_{i\mid j}^{(t+1)}, &\text{otherwise,} \label{eq.z_update}
    \end{array}\right. 
\end{align}
where $y$ and $z$ are auxiliary variables, $\theta\in (0,1)$ is an avaraging constant and $c>0$ is a constant controling the convergence rate. When the objective function is uniformly convex, the algorithm will also converge for $\theta=1$ (standard PDMM) \cite{heusdens2024distributed}. Since the LP problem is not uniformly convex, we primarily focus on analyzing the case where $\theta=\frac{1}{2}$. The choice corresponds to the $\frac{1}{2}$-averaged version of PDMM, which is equivalent to ADMM.

\subsection{Adversary model and evaluation metrics}

\noindent\textbf{Adversary model}: We consider two widely used adversary models. The first is the passive adversary, represented by corrupt nodes in the graph. These nodes follow the algorithm's instructions but collude to gather and share information. We denote the set of corrupt nodes in the network by $\mathcal{V}_c$ and the set of honest nodes by $\mathcal{V}_h$. The second type is eavesdropping, which can intercept all messages transmitted through unencrypted channels. These two adversaries are assumed to be able to collaborate to infer the private data of honest nodes.

The performance of the algorithm is evaluated based on the following two requirements and their corresponding metrics.

\noindent\textbf{Output accuracy}: It measures how close the optimization results of the privacy-preserving algorithm are to those original non-privacy-preserving algorithms. We quantify the accuracy using the squared error $\|x_i^{(t_{\recht{max}})}- x^{*}\|_2^{2}$, where $t_{\recht{max}}$ denotes the maximum number of iterations and $x^*$ the optimal solution.

\noindent\textbf{Individual privacy}:
 Both $\epsilon$-DP and mutual information approaches are widely used information-theoretical methods for quantifying privacy \cite{cuff2016differential,Jane2020TIFS}. 
 We adopt mutual information as the metric for assessing individual privacy as it is shown effective in the literature \cite{duchi2013local,kairouz2014extremal,li2022privacy}. Given the random variable $S_i$ 
representing the private data at node $i$ and $\mathcal{O}$ representing the total information that the adversary can observe, the  mutual information $I(S_i;\mathcal{O})$ \cite{cover2012elements} measures the amount of information learned about $S_i$ by observing $\mathcal{O}$, which is give by
\begin{equation}
    \nonumber I(S_i;\mathcal{O})=h(S_i)-h(S_i\mid \mathcal{O}),
\end{equation}
where $h(\cdot)$ denotes differential entropy.
When $I(S_i;\mathcal{O})=h(S_i)$, the adversary has sufficient information to fully infer $s_i$. When $I(S_i;\mathcal{O})=0$, the adversary cannot infer any information about  $S_i$ given the available information ${\cal O}$.

\section{Proposed approach}

We now proceed to the proposed approach. We first introduce how to reformulate the problem using an augmented graph by adding dummy nodes to the network. One for each node, which serves the purpose of overcoming the limitation of requiring at least one honest neighbor for privacy preservation. That is, given node $i\in{\cal V}$, we introduce a dummy node $i'$. The new graph thus obtained will be denoted by $\mathcal{G}^\prime=(\mathcal{V}^\prime,\mathcal{E}^\prime)$ where $|{\cal V}'| = 2|{\cal V}|$. See Fig. \ref{fig:topo} for an illustration.
\begin{figure}[t]
\vspace{-5pt} 
    \centering
    \includegraphics[width=0.35\textwidth]{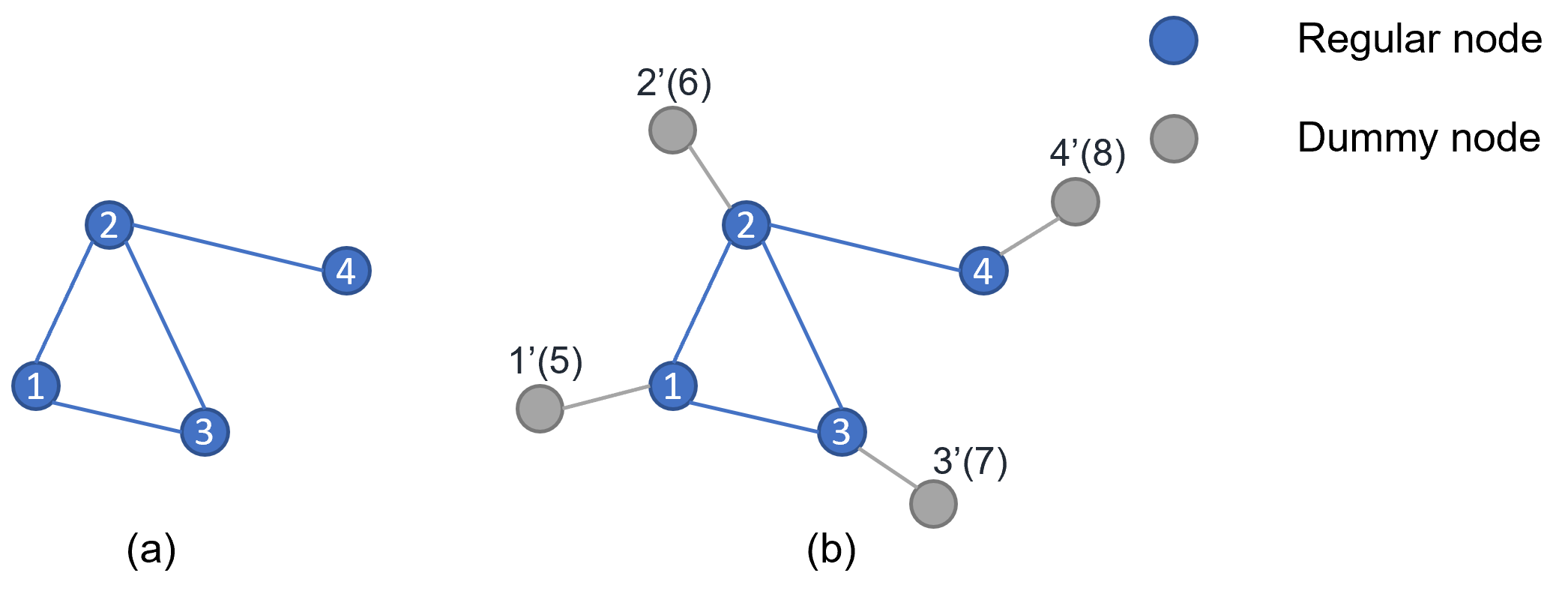}
    \vspace{-8pt} 
    \caption{(a) Example of the original graph $\mathcal{G}$; (b) Example of the augmented graph $\mathcal{G}^\prime$ including dummy nodes.}
    \label{fig:topo}
    \vspace{-12pt} 
\end{figure}
With this, we can formulate the constraints in \eqref{eq:lp} as
\begin{align*}
    A_{ij} x_i+A_{ji} x_j =0 \quad&\text{for} \quad(i,j)\in \cal E \\
    A_{ii'} x_i+A_{i'i} x_j \leq -s_i \quad&\text{for} \quad(i,i')\in \cal E^\prime\backslash \cal E
\end{align*}
where $A_{ij}= -A_{ji} = 1$ when $i<j$, and $A_{ii'}=-1$, $A_{i'i}=0$.

In addition to adding dummy nodes to the graph, we utilize the concept of subspace perturbation, initially introduced for distributed optimization with equality constraints \cite{li2020privacy}. The main idea is to properly initialize the auxiliary variable $\boldsymbol{z}^{(0)}$, thereby safeguarding the private data from being exposed without sacrificing the output accuracy.
Details of the proposed approach are summarized in Alg.~\ref{alg:pdmm}. 
Note that the updates at each node $i\in{\cal V}'$ can be done in parallel and that no direct collaboration is required between nodes during the computation of these updates, leading to an attractive (parallel) algorithm for optimization in practical networks.

\begin{algorithm}[t]
  \caption{Proposed approach}
  \label{alg:pdmm}
  \begin{algorithmic}
     \ForAll{ $i \in \mathcal{V}', j \in \mathcal{N}_i$,}
     \State Randomly initialize $z_{i\mid j}^{(0)}$ \Comment{Initialization}
     \State $\text{Node}_j \leftarrow \text{Node}_i(z_{  i|j}^{(0)})$
     \vspace{-5pt}
     \EndFor
          \For{$t=0,1,...$} 
            \ForAll{$i \in \mathcal{V}$ } 
                \State 
                \vspace{-8pt} 
                    \begin{equation} \label{eq:x_up}
                    x_{  i}^{(t+1)} =\frac{ -1- \sum_{  j \in {\cal N}_i}  A_{  ij}z_{  i|j}^{(t)}+(z^{(t)}_{i|i^\prime}+\frac{1}{2}cs_i)}{c(d_i+1)}
                \end{equation}
                \State \vspace{-3pt} 
                    \begin{equation}\label{eq:z_up}
                    \hspace{3mm}\forall j\in \mathcal{N}_i: z_{  j|i}^{(t+1)}
        =\frac{1}{2}z_{j\mid i}^{(t)}+\frac{1}{2}(z_{  i|j}^{(t)}+2c A_{  ij}x_i^{(t+1)})
                \end{equation}
                \State $\text{Node}_{j\in \mathcal{N}_i} \leftarrow \text{Node}_i(x_i^{(t+1)})$\Comment{Broadcast}
                \ForAll{ $j \in \mathcal{N}_i$} 
                \vspace{2pt}
                 \State $z_{i\mid j}^{(t+1)}$ from \eqref{eq:z_up} %
            \EndFor
            
        \State $y_{i|i^\prime}^{(t)}=z_{i|i^\prime}^{(t)}-2cx_i^{(t+1)}+cs_i$; $y_{i^\prime|i}^{(t)}=z_{i^\prime|i}^{(t)}+cs_i$
        \If {$y_{  i|i^\prime}^{(t)}+y_{i^\prime\mid i}^{(t)}> 0$}\Comment{Dummy updates}
        \State 
        \vspace{-8pt}
                    \begin{align}\label{eq:dummy_up1}
                    &z_{i\mid i^\prime}^{(t+1)}=\frac{1}{2}z_{i\mid i^\prime}^{(t)}+\frac{1}{2}
    y_{i^\prime|i}^{(t)}\\ \label{eq:dummy_up1p}
    &z_{i^\prime\mid i}^{(t+1)}=\frac{1}{2}z_{i^\prime\mid i}^{(t)}+\frac{1}{2}
    y_{i|i^\prime}^{(t)}
                \end{align}
                \vspace{-12pt} 
        \Else
        \State \vspace{-13pt} 
                    \begin{align}\label{eq:dummy_up2}
                    &z_{i\mid i^\prime}^{(t+1)}=\frac{1}{2}z_{i\mid i^\prime}^{(t)}-\frac{1}{2} y_{i|i^\prime}^{(t)}\\
    &z_{i^\prime\mid i}^{(t+1)}=\frac{1}{2}z_{i^\prime\mid i}^{(t)}-\frac{1}{2} y_{i^\prime|i}^{(t)}\label{eq:dummy_up2up}
                \end{align} 
        \EndIf
      \EndFor
      \EndFor
  \end{algorithmic}
  \vspace{-2pt} 
\end{algorithm}

We now analyze the performances of the proposed approach.
\vspace{-15pt} 
\subsection{Output accuracy} 
When subspace perturbation is applied to inequality-constrained problems, it is shown in \cite[Proposition 1]{heusdens2024distributed} that the optimization variable in A/PDMM, under both equality and inequality constraints, converge to the optimal solution, regardless of the initial values of the auxiliary variable. This ensures that the accuracy of the output is not compromised by the initialization choice of the auxiliary variable. Therefore, our primary focus will be on proving the privacy guarantees.

\vspace{-4pt} 
\subsection{Individual privacy}
Given that the eavesdropping adversary holds the information transmitted over all channels given by $\{x_{i}^{(t+1)}:t\geq 0, i\in \mathcal{V}\}\cup \{z_{i\mid j}^{(0)}: (i,j)\in\mathcal{E}\}$,  and  corrupt nodes hold local updates information $\{s_j,z_{j\mid i}^{(t)},z_{i\mid j}^{(t)}:t\geq 0, j\in \mathcal{V}_c, (i,j)\in\mathcal{E}'\}$.  
Let $\mathcal{T}=\{0,1,...,t_{\rm max}\}$. Given $i\in \mathcal{V}_h$, the individual privacy of honest node $i$ is defined as how much information about the private data $s_i$ can be inferred given the adversaries' knowledge. This is measured by
\begin{align} \label{eq.ind}
 I(S_i;\mathcal{O})=&I(S_i;\{S_j\}_{j\in\mathcal{V}_c},\{X_{j}^{(t+1)}\}_{j\in \mathcal{V},t\in\mathcal{T}},\\ \nonumber
    &\{Z_{j\mid k}^{(0)}\}_{ (j,k)\in\mathcal{E}},\{Z_{j\mid k}^{(t)},Z_{k\mid j}^{(t)}\}_{j\in \mathcal{V}_c,(j,k)\in \mathcal{E}^\prime,t\in\mathcal{T}})  
\end{align}

Without loss of generality, assuming the private data $s_i$s are drawn from independent distributions, our main result is given in Theorem~\ref{thm.1}, which states that the proposed approach can guarantee (asymptotically) perfect individual privacy even though all other nodes are corrupt, i.e., no information about its  private data $s_i$ can be inferred by the passive and eavesdropping adversaries.
\vspace{-2pt}
\begin{theorem}\label{thm.1}
Given $i\in{\cal V}_h$. If 
\begin{equation}
\forall t\in \mathcal{T}: ~ z_{  i|i^\prime}^{(t)}+z_{i^\prime\mid i}^{(t)}-2cx_i^{(t+1)}+2cs_i\leq 0,
\label{eq:cond}
\end{equation}
\vspace{-10pt}
\begin{align}\label{eq.perfect}
\textit{then}
\lim_{\sigma_{z}\rightarrow \infty }I(S_i;\mathcal{O})=\lim_{\sigma_{z}\rightarrow \infty }I(S_i;Z^{(0)}_{i|i^\prime}+\frac{1}{2}cS_i)\rightarrow 0,
\end{align}
\end{theorem}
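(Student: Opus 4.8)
The plan is to show that, under condition \eqref{eq:cond}, the entire observable transcript $\mathcal{O}$ collapses --- as far as $S_i$ is concerned --- to a single scalar combination, namely $Z^{(0)}_{i\mid i'} + \tfrac12 c S_i$, and that the infinite-variance limit of the initialization $z^{(0)}_{i\mid i'}$ drives the mutual information with this scalar to zero. The argument proceeds in three stages: first, characterize which parts of the transcript actually depend on $s_i$; second, reduce that dependence to the one scalar; third, take the limit.

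\textbf{Step 1: Localizing the dependence on $s_i$.} I would first observe that $s_i$ enters the dynamics only through the dummy-edge $(i,i')$: in the $x$-update \eqref{eq:x_up} it appears as the term $z^{(t)}_{i\mid i'} + \tfrac12 c s_i$, and in the dummy $z$-updates via $y^{(t)}_{i\mid i'} = z^{(t)}_{i\mid i'} - 2c x_i^{(t+1)} + c s_i$ and $y^{(t)}_{i'\mid i} = z^{(t)}_{i'\mid i} + c s_i$. The key point of condition \eqref{eq:cond} is that $y^{(t)}_{i\mid i'} + y^{(t)}_{i'\mid i} = z^{(t)}_{i\mid i'} + z^{(t)}_{i'\mid i} - 2c x_i^{(t+1)} + 2c s_i \le 0$, so the \emph{else} branch \eqref{eq:dummy_up2}--\eqref{eq:dummy_up2up} is taken at every iteration. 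In that branch $z^{(t+1)}_{i\mid i'} = \tfrac12 z^{(t)}_{i\mid i'} - \tfrac12 y^{(t)}_{i\mid i'} = \tfrac12 z^{(t)}_{i\mid i'} - \tfrac12(z^{(t)}_{i\mid i'} - 2c x_i^{(t+1)} + c s_i) = c x_i^{(t+1)} - \tfrac12 c s_i$. Substituting this back into \eqref{eq:x_up} at the next iteration, the combination $z^{(t+1)}_{i\mid i'} + \tfrac12 c s_i = c x_i^{(t+1)}$ no longer contains $s_i$ explicitly --- so from iteration $t\ge 1$ onward the node-$i$ updates, and hence everything the adversary sees that is downstream of node $i$, are functionally independent of $s_i$ once $x_i^{(1)}$ is known. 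Concretely: the only channel by which $s_i$ reaches $\mathcal{O}$ is through $x_i^{(1)}$, which by \eqref{eq:x_up} is an invertible affine function of $z^{(0)}_{i\mid i'} + \tfrac12 c s_i$ (the remaining terms $-1 - \sum_{j\in\mathcal{N}_i} A_{ij} z^{(0)}_{i\mid j}$ being known to the adversary, since all neighbors are corrupt and the $z^{(0)}_{i\mid j}$ on real edges are eavesdropped).

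\textbf{Step 2 and the limit.} Given Step 1, a data-processing / Markov-chain argument gives $I(S_i;\mathcal{O}) = I(S_i; Z^{(0)}_{i\mid i'} + \tfrac12 c S_i)$: conditioned on this scalar and on the adversary's other known quantities (all of which are independent of $S_i$ by assumption, since the $s_j$, $j\in\mathcal{V}_c$, and the independently drawn initializations carry no information about $s_i$), the rest of $\mathcal{O}$ is a deterministic function of things independent of $S_i$. Finally, write $W = Z^{(0)}_{i\mid i'} + \tfrac12 c S_i$ with $Z^{(0)}_{i\mid i'}$ independent of $S_i$ and having variance $\sigma_z^2$; then $I(S_i; W) = h(W) - h(W\mid S_i) = h(W) - h(Z^{(0)}_{i\mid i'})$, and as $\sigma_z\to\infty$ a standard argument (e.g. via the entropy-power inequality, or directly for the Gaussian case $h(W) - h(Z^{(0)}_{i\mid i'}) = \tfrac12\log(1 + \tfrac{c^2\sigma_{S_i}^2}{4\sigma_z^2}) \to 0$) shows this vanishes.

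\textbf{Main obstacle.} The delicate part is Step 1: rigorously arguing that condition \eqref{eq:cond} --- which is stated as a property of the \emph{realized} trajectory --- forces the \emph{else} branch uniformly in $t$, and then propagating the cancellation $z^{(t)}_{i\mid i'} + \tfrac12 c s_i = c x_i^{(t)}$ cleanly through the coupled $x$- and $z$-recursions so that no residual $s_i$-dependence leaks into any $x_j^{(t+1)}$ for $t\ge 1$ or into the eavesdropped initializations. One must be careful that $x_i^{(1)}$ is the \emph{only} place $s_i$ appears and that it is invertibly tied to $W$, so that observing $x_i^{(1)}$ and observing $W$ are informationally equivalent; the rest is the routine Gaussian/large-variance limit.
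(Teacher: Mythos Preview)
Your proposal is correct and rests on the same pivotal observation as the paper: under condition~\eqref{eq:cond} the \emph{else} branch \eqref{eq:dummy_up2} is taken at every step for node $i$, which yields the cancellation $z_{i\mid i'}^{(t+1)}+\tfrac12 c s_i = c\,x_i^{(t+1)}$, so that from $t\geq 1$ onward the combination feeding into \eqref{eq:x_up} carries no explicit $s_i$. Both proofs then conclude that $s_i$ reaches the adversary only through the single scalar $W=Z^{(0)}_{i\mid i'}+\tfrac12 c S_i$, and finish with the same infinite-variance limit.

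The execution differs. The paper first derives two auxiliary identities---\eqref{eq:zdif} for real-edge $z$-increments and \eqref{eq:zxSuf} for $x$-increments---and then performs a five-step chain of mutual-information equalities (a)--(e), repeatedly replacing arguments of $I(S_i;\cdot)$ by bijectively equivalent quantities until only $W$ remains. Your route is more operational: you trace the $s_i$-dependence through the recursion, show directly that $\mathcal O$ is a deterministic function of $(W,U)$ with $U\perp S_i$, and invoke a single data-processing step. This avoids the preliminary identities and is arguably cleaner; it also makes transparent that only node $i$'s branch condition matters, whereas the paper's steps (c)--(d), as written, appear to lean on the else branch at \emph{all} honest nodes when reducing $\{X_j^{(1)},X_j^{(2)}\}_{j\in\mathcal V_h}$ to $\{W_j\}_{j\in\mathcal V_h}$. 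The paper's chain, on the other hand, makes each reduction explicit and auditable. Either way, the limiting argument $I(S_i;W)\to 0$ as $\sigma_z\to\infty$ is identical.
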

\begin{proof}
\vspace{-3pt}
    See Appendix \ref{pf.thm1}.
    \vspace{-2pt}
\end{proof}
\noindent
Note that \eqref{eq:cond} is equivalent to the condition $y_{  i|i^\prime}^{(t)}+y_{i^\prime\mid i}^{(t)}\leq 0$, see \eqref{eq.z_update}. In other words, in order to preserve privacy we should avoid data exchange 
between dummy and regular nodes.

Several remarks are in place here. First, from \eqref{eq:cond}, it is clear that privacy is guaranteed by the honest node's dummy nodes, meaning no honest neighbor is required for privacy assurance. Second,  the node with maximum value will not satisfy condition \eqref{eq:cond} which is to be expected as we require perfect output accuracy so that the value $s_{\rm max}$ will be eventually available to all nodes. However, for the remaining nodes, the condition for privacy can be satisfied. In the following section, we will demonstrate that it is possible to meet this condition for all iterations by adjusting the convergence parameter $c$.

\section{Simulation results}
\label{sec:exp}

\textbf{Experimental setting:} We compare our method with existing privacy-preserving maximum consensus approaches \cite{venkategowda2020privacy,wang2018differentially}. We generate a random geometric graph (RGG) \cite{penrose2003random} with $n=10$ nodes. The private data, i.e. $s_i$ for $i \in \mathcal{V}$, are randomly drawn from a standard normal distribution $\mathcal{N}(0,1)$. The auxiliary variables $z_{i\mid i^\prime}^{(0)}$ and $z_{i^\prime\mid i}^{(0)}$ are drawn from $\mathcal{N}(\mu_z,\sigma_z^2)$ and $\mathcal{N}(-\mu_z,\sigma_z^2)$, respectively. Here $\mu_z$ can take a large value (we use $\mu_z=1000$ in the experiments) to ensure that condition \eqref{eq:cond} is satisfied; a larger value of $z_{i\mid i^\prime}$ will result in a correspondingly larger value of $x_i$. To counterbalance the influence of $z_{i\mid i^\prime}$ in \eqref{eq:cond}, however, we introduce a similar or larger negative value for $z_{i^\prime\mid i}$ at initialization.

\vspace{-4pt} 
\subsection{Information leakage via mutual information}
To visualize information loss in \eqref{eq.perfect} as a function of the variance of the inserted noise, we used NPEET toolbox \cite{npeet} to estimate the normalized mutual information $I(S_i;Z^{(0)}_{i|i^\prime}+\frac{1}{2}cS_i)/I(S_i;S_i)$ across $\sigma_z$, as depicted  Fig.~\ref{fig:nmi}. As expected, the information loss decreases notably as $\sigma_z$ increases.

\begin{figure}[htb]
\vspace{-8pt} 
    \centering
\includegraphics[width=0.25\textwidth]{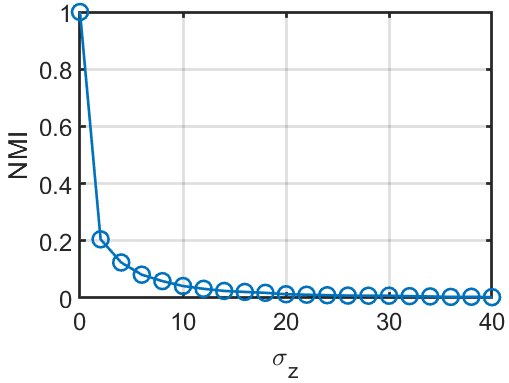}
\vspace{-16pt} 
    \caption{Normalized mutual information (NMI) $\frac{I(S_i;Z^{(0)}_{i|i^\prime}+\frac{1}{2}cS_i)}{I(S_i;S_i)}$ as a function of variance $\sigma_z$}.
    \label{fig:nmi}
    \vspace{-15pt} 
\end{figure}

\vspace{-4pt} 
\subsection{Performance comparison}
We first show that condition \eqref{eq:cond}, required in Theorem \ref{thm.1}, can be satisfied at all times by adjusting the convergence parameter $c$.  Fig.~\ref{fig:cond}  shows the LHS of \eqref{eq:cond} as a function of the iteration number for three different choices of the parameter $c$. We can see that 1) the blue curve, representing  $x_i^{(t)}$ of the node having the maximum value, does not meet condition \eqref{eq:cond}. This is expected as the maximum value will eventually be known to all nodes. 2) For other nodes, a larger parameter $c$ helps to satisfy condition \eqref{eq:cond}, thereby guaranteeing privacy. 

\begin{figure}[htb]
\vspace{-6pt} 
    \centering
\includegraphics[width=0.48\textwidth]{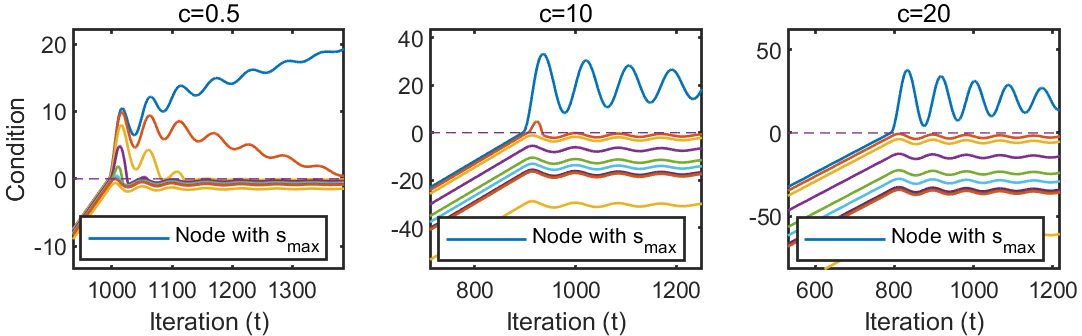}
\vspace{-6pt} 
    \caption{LHS of \eqref{eq:cond} as a function of $t$ for three values of  $c$, where the blue lines are the results for the node having the maximum value and the others for nodes having $s_i < s_{\rm max}$.}
    \label{fig:cond}
    \vspace{-2pt} 
\end{figure}

In Fig.~\ref{fig:dummy} we compare our proposed approach with two existing algorithms \cite{wang2018differentially, venkategowda2020privacy} using three privacy levels $\sigma=10^{-2},10^{-1},10^{0}$, respectively.
It is evident that as the noise increases, both existing approaches exhibit a pronounced deterioration in accuracy, highlighting the trade-off between privacy and accuracy. In contrast, our proposed method converges to the optimal result regardless of the noise variance, demonstrating that it does not compromise accuracy for privacy. This is further detailed in Fig. \ref{fig:x} where the convergences of minimum, median and maximum nodes are illustrated. 

\begin{figure}[htb]
\vspace{-12pt} 
    \centering
\includegraphics[width=0.47\textwidth]{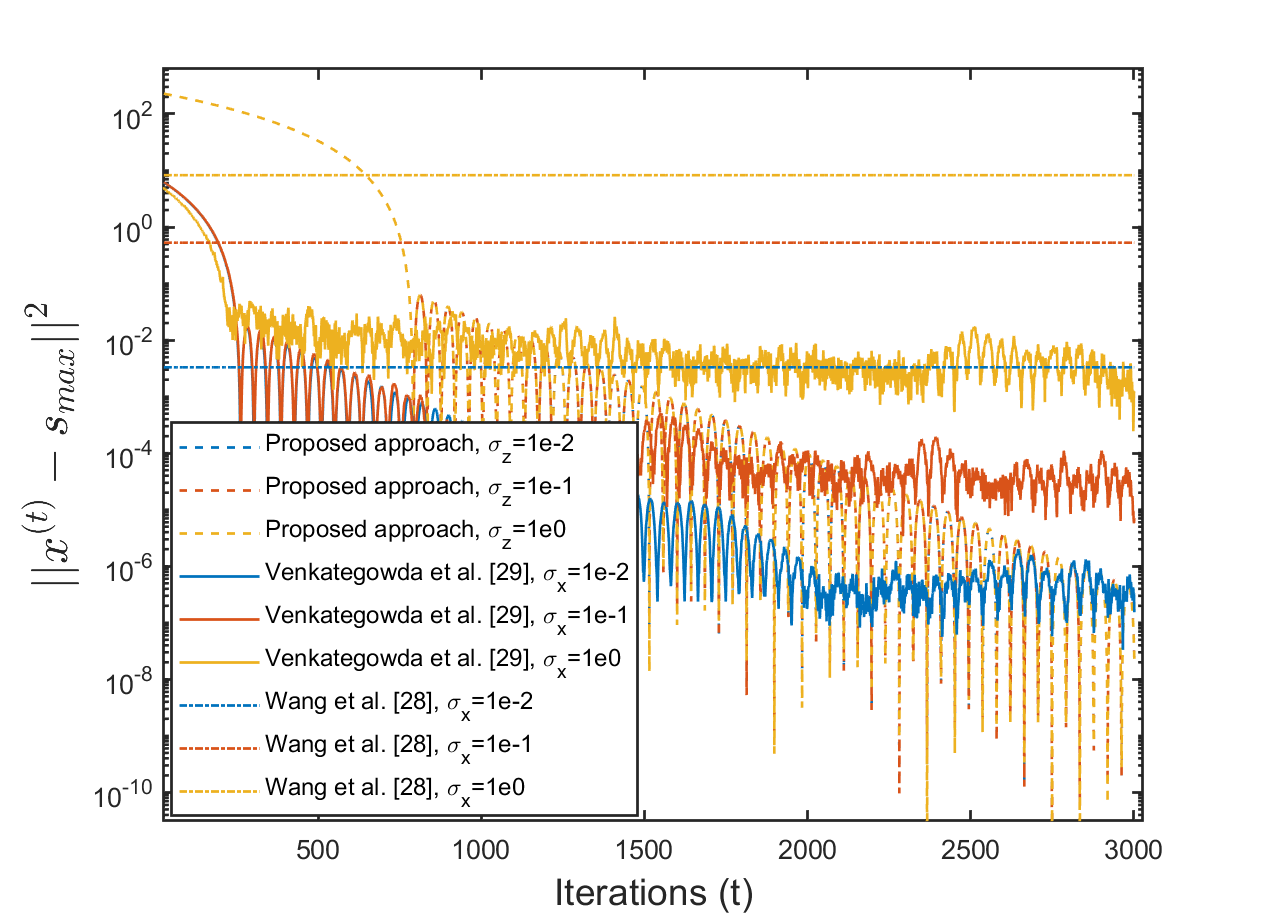}
\vspace{-7pt} 
    \caption{Performance comparison of the proposed approach with two existing approaches under various privacy levels.}
    \label{fig:dummy}
    \vspace{-6pt} 
\end{figure}

\begin{figure}[htb]
\vspace{-6pt} 
    \centering
\includegraphics[width=0.49\textwidth]{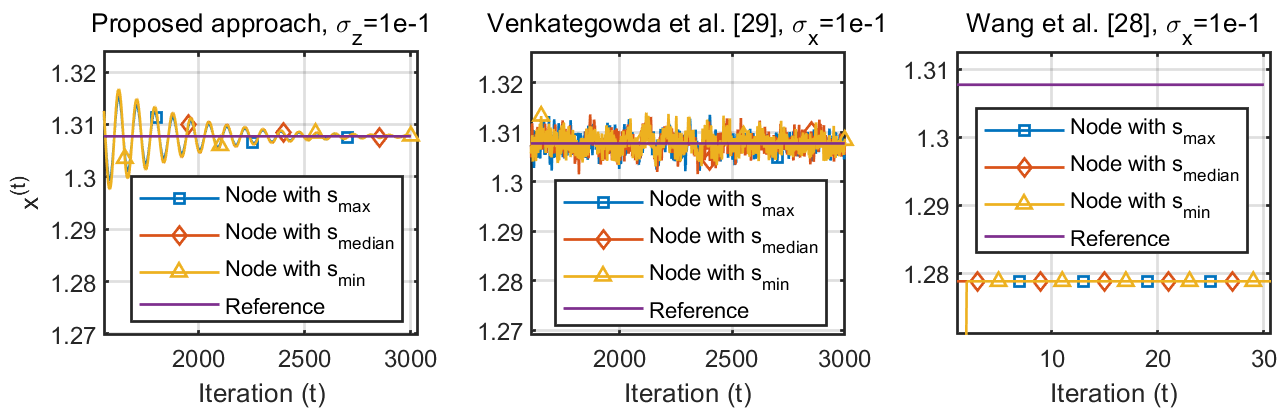}
\vspace{-6pt} 
    \caption{Convergence of the optimization variable $x^{(t)}$ for three nodes with minimum, median and maximum value of three algorithms, respectively.}
    \label{fig:x}
    \vspace{-12pt} 
\end{figure}

\section{Conclusion}
In this paper, we proposed a novel privacy-preserving distributed maximum consensus algorithm. Our method involves adding dummy nodes to form an augmented graph and applying inequality constraint-based subspace perturbation, ensuring the privacy of honest participants. Using information-theoretical measures, we demonstrated that the proposed approach can guarantee perfect privacy against both eavesdropping and passive adversaries. Furthermore, the method preserves privacy without compromising accuracy, maintaining superior performance. Experimental results further consolidate the superiority of our approach compared to existing methods.
\appendix
\section{Appendix}
\subsection{Proof of Theorem \ref{thm.1}} \label{pf.thm1}
Replacing $z_{j\mid i}^{(t)}$ and $z_{i\mid j}^{(t)}$  in \eqref{eq:z_up} we obtain  
\begin{align} \label{eq:zdif}  
z_{j\mid i}^{(t+1)}-z_{j\mid i}^{(t)}=cA_{ij}x_i^{(t+1)}-\frac{1}{2}cA_{ij}x_i^{(t)}+\frac{1}{2}cA_{ji}x_j^{(t)}.
\end{align}
Moreover, considering the difference $x_j^{(t+2)}-x_j^{(t+1)}$ using \eqref{eq:x_up} and combining with \eqref{eq:zdif} we obtain
\begin{footnotesize}
\begin{align} \label{eq:zxSuf}
    &x_j^{(t+2)}-x_j^{(t+1)} \nonumber\\
    &=\frac{c\sum_{  k \in {\cal N}_j}  (x_k^{(t+1)}-\frac{1}{2}x_k^{(t)}-\frac{1}{2}x_j^{(t)})+(z^{(t+1)}_{j|j^\prime}-z^{(t)}_{j|j^\prime})}{c(d_j+1)}.
\end{align}
\end{footnotesize}
With this, \eqref{eq.ind} becomes 
\begin{footnotesize}
    \begin{align*}
    &I(S_i;\mathcal{O}) \\
    \overset{(a)}{=}&I(S_i;\{S_j,Z_{j\mid j^\prime}^{(0)},Z_{j^\prime\mid j}^{(0)}\}_{j\in\mathcal{V}_c},\{X_{j}^{(t+1)}\}_{j\in \mathcal{V},t\in\mathcal{T}},\{Z_{j\mid k}^{(0)}\}_{ (j,k)\in\mathcal{E}})
    \\
    \overset{(b)}{=}&I(S_i;\{S_j,Z_{j\mid j^\prime}^{(0)},Z_{j^\prime\mid j}^{(0)}\}_{j\in\mathcal{V}_c},\{Z_{j\mid j^\prime}^{(t+2)}-Z_{j\mid j^\prime}^{(t+1)}\}_{j\in \mathcal{V},t\in\mathcal{T}},\\ 
    &\{X_{j}^{(1)},X_{j}^{(2)}\}_{j\in \mathcal{V}},\{Z_{j\mid k}^{(0)}\}_{ (j,k)\in\mathcal{E}})\\
    \overset{(c)}{=}&I(S_i;\{S_j,Z_{j\mid j^\prime}^{(0)},Z_{j^\prime\mid j}^{(0)}\}_{j\in\mathcal{V}_c},\{X_{j}^{(1)},X_{j}^{(2)}\}_{j\in \mathcal{V}},\{Z_{j\mid k}^{(0)}\}_{ (j,k)\in\mathcal{E}})
    \\
    \overset{(d)}{=}&I(S_i;\{S_j,Z_{j\mid j^\prime}^{(0)},Z_{j^\prime\mid j}^{(0)}\}_{j\in\mathcal{V}_c},\{Z_{j\mid k}^{(0)}\}_{(j,k)\in\mathcal{E}},\{Z^{(0)}_{j|j^\prime}+\frac{1}{2}cS_j\}_{j\in \mathcal{V}_h})
        \\
    \overset{(e)}{=}&I(S_i;Z^{(0)}_{i|i^\prime}+\frac{1}{2}cS_i)
\end{align*}
\end{footnotesize}
where (a) follows from \eqref{eq:z_up}, \eqref{eq:dummy_up1} and \eqref{eq:dummy_up2}  since the $z$ variables can be derived from previous $z,x_i$, and $s_i$ values, (b) follows from \eqref{eq:zxSuf} since $\{x_j^{(1)},x_j^{(2)}\}_{j\in \mathcal{V}}$ and  $\{z_{j\mid j^\prime}^{(t+2)}-z_{j\mid j^\prime}^{(t+1)}\}_{j\in \mathcal{V},t\in\mathcal{T}}$  are sufficient to compute all $\{x_j^{(t+1))}\}_{j\in \mathcal{V},t\in\mathcal{T}}$, and vise versa,   (c) assumes that  condition \eqref{eq:cond} is satisfied so that $ z_{j\mid j^\prime}^{(t+2)}-z_{j\mid j^\prime}^{(t+1)}=c(x_i^{(t+2)}-x_i^{(t+1)})$ can be recursively computed from $\{x_j^{(1)},x_j^{(2)}\}_{j\in \mathcal{V}}$, (d) holds since $\{x_{j}^{(1)},x_{j}^{(2)}\}_{j\in \mathcal{V}_c}$ can be computed from $z^{(0)}$ and $s$ from corrupt nodes, while (e) holds because $S_j, j\neq i,$ and all $z$s are independent of $S_i$.
Hence, when $\sigma_{Z_{i\mid i^\prime}}\rightarrow\infty$, $Z_{j\mid j\prime}^{(0)}+\frac{1}{2}cS_j$ becomes independent of $S_i$, and thus $I(S_i;\mathcal{O})\rightarrow 0$, thereby completing the proof.

\newpage
\bibliographystyle{IEEEbib}
\bibliography{refs}

\begin{thebibliography}{10}

\bibitem{francca2020distributed}
G.~Fran{\c{c}}a and J.~Bento,
\newblock ``Distributed optimization, averaging via admm, and network topology,''
\newblock {\em Proceed. IEEE}, vol. 108, no. 11, pp. 1939--1952, 2020.

\bibitem{zhang2014asynchronous}
R.~Zhang and J.~Kwok,
\newblock ``Asynchronous distributed admm for consensus optimization,''
\newblock in {\em Int. Conf. Mach. Learn.} PMLR, pp. 1701--1709, 2014.

\bibitem{deplano2023unified}
D.~Deplano, N.~Bastianello, M.~Franceschelli, and K.~H. Johansson,
\newblock ``A unified approach to solve the dynamic consensus on the average, maximum, and median values with linear convergence,''
\newblock in {\em 2023 62nd IEEE Conf. Decis. Control (CDC)}, pp. 6442--6448, 2023.

\bibitem{huang2015differentially}
{Z. Huang, S. Mitra, and N. Vaidya},
\newblock ``Differentially private distributed optimization.,''
\newblock in {\em Proc. Int. Conf. Distrib. Comput. Netw}, pp. 1--10, 2015.

\bibitem{nozari2018differentially}
{E. Nozari, P. Tallapragada, and J. Cort{\'e}s},
\newblock ``Differentially private distributed convex optimization via functional perturbation,''
\newblock {\em IEEE Trans. Control Netw. Syst.}, vol. 5, no. 1, pp. 395--408, 2018.

\bibitem{zhang2016dynamic}
{T. Zhang and Q. Zhu},
\newblock ``Dynamic differential privacy for {ADMM}-based distributed classification learning,''
\newblock {\em IEEE Trans. Inf. Forensics Secur.}, vol. 12, no. 1, pp. 172--187, 2016.

\bibitem{zhang2018improving}
{X. Zhang, M. M. Khalili, and M. Liu},
\newblock ``Improving the privacy and accuracy of {ADMM}-based distributed algorithms,''
\newblock {\em Proc. Int. Conf. Mach. Learn.}, pp. 5796--5805, 2018.

\bibitem{xiong2020privacy}
{Y. Xiong, J. Xu, K. You, J. Liu and L. Wu},
\newblock ``Privacy preserving distributed online optimization over unbalanced digraphs via subgradient rescaling,''
\newblock {\em IEEE Trans. Control Netw. Syst.}, 2020.

\bibitem{gupta2017privacy}
{N. Gupta, J. Katz, N. Chopra},
\newblock ``Privacy in distributed average consensus,''
\newblock {\em IFAC-PapersOnLine}, vol. 50, no. 1, pp. 9515--9520, 2017.

\bibitem{li2019privacyA}
{Q. Li, I. Cascudo, and M. G. Christensen},
\newblock ``Privacy-preserving distributed average consensus based on additive secret sharing,''
\newblock in {\em Proc. Eur. Signal Process. Conf.}, pp. 1--5, 2019.

\bibitem{tjell2020privacy}
{K. Tjell and R. Wisniewski},
\newblock ``Privacy preservation in distributed optimization via dual decomposition and {ADMM},''
\newblock in {\em Proc. IEEE 58th Conf. Decis. Control.}, pp. 7203--7208, 2020.

\bibitem{tjell2019privacy}
{K. Tjell, I. Cascudo and R. Wisniewski},
\newblock ``Privacy preserving recursive least squares solutions,''
\newblock in {\em Proc. Eur. Control Conf.}, pp. 3490--3495, 2019.

\bibitem{xu2015secure}
{Z. Xu and Q. Zhu},
\newblock ``Secure and resilient control design for cloud enabled networked control systems,''
\newblock in {\em Proc. 1st ACM Workshop Cyber-Phys. Syst.-Secur. Privacy.}, pp. 31--42, 2015.

\bibitem{li2019privacyS}
{Q. Li and M. G. Christensen},
\newblock ``A privacy-preserving asynchronous averaging algorithm based on shamir's secret sharing,''
\newblock in {\em Proc. Eur. Signal Process. Conf.}, pp. 1--5, 2019.

\bibitem{shoukry2016privacy}
{Y. Shoukry et al.},
\newblock ``Privacy-aware quadratic optimization using partially homomorphic encryption,''
\newblock in {\em IEEE 55th Conf. Decis. Control.}, pp. 5053--5058, 2016.

\bibitem{zhang2019admm}
{C. Zhang, M. Ahmad, and Y. Wang},
\newblock ``{ADMM} based privacy-preserving decentralized optimization,''
\newblock {\em IEEE Trans. Inf. Forensics Secur.}, vol. 14, no. 3, pp. 565--580, 2019.

\bibitem{Jane2020ICASSP}
{Q. Li, R. Heusdens and M. G. Christensen},
\newblock ``Convex optimisation-based privacy-preserving distributed average consensus in wireless sensor networks,''
\newblock in {\em Proc. Int. Conf. Acoust., Speech, Signal Process.}, pp. 5895--5899, 2020.

\bibitem{Jane2020LS}
{Q. Li, R. Heusdens and M. G. Christensen},
\newblock ``Convex optimization-based privacy-preserving distributed least squares via subspace perturbation,''
\newblock in {\em Proc. Eur. Signal Process. Conf.}, 2020.

\bibitem{li2020privacy}
Q.~Li, R.~Heusdens, and M.~G. Christensen,
\newblock ``Privacy-preserving distributed optimization via subspace perturbation: A general framework,''
\newblock {\em IEEE Trans. Signal Process.}, vol. 68, pp. 5983--5996, 2020.

\bibitem{jordan2024}
S.~O. Jordan, Q.~Li, and R.~Heusdens,
\newblock ``Privacy-preserving distributed optimisation using stochastic {PDMM},''
\newblock in {\em Proc. Int. Conf. Acoust., Speech, Signal Process.}, pp. 8571--8575, 2024.

\bibitem{li2022communication}
Q.~Li, R.~Heusdens, and M.~G. Christensen,
\newblock ``Communication efficient privacy-preserving distributed optimization using adaptive differential quantization,''
\newblock {\em Signal Process.}, vol. 194, pp. 108456, 2022.

\bibitem{li2023adaptive}
Q.~Li, J.~S. Gundersen, M.~Lopuha{\"a}-Zwakenberg, and R.~Heusdens,
\newblock ``Adaptive differentially quantized subspace perturbation ({ADQSP}): A unified framework for privacy-preserving distributed average consensus,''
\newblock {\em IEEE Trans. Inf. Forensics Security.}, 2023.

\bibitem{Cramer2015}
{R. Cramer, I. B. Damg{\aa}rd, and J. B. Nielsen},
\newblock {\em Secure Multiparty Computation and Secret Sharing},
\newblock Cambridge University Press, 2015.

\bibitem{boyd2011distributed}
S.~Boyd, N.~Parikh, E.~Chu, B.~Peleato, J.~Eckstein, et~al.,
\newblock ``Distributed optimization and statistical learning via the alternating direction method of multipliers,''
\newblock {\em Found. Trends Mach. Learn.}, vol. 3, no. 1, pp. 1--122, 2011.

\bibitem{zhang2017distributed}
G.~Zhang and R.~Heusdens,
\newblock ``Distributed optimization using the primal-dual method of multipliers,''
\newblock {\em IEEE Trans. Signal Inf. Process. Netw.}, vol. 4, no. 1, pp. 173--187, 2017.

\bibitem{heusdens2024distributed}
R.~Heusdens and G.~Zhang,
\newblock ``Distributed optimisation with linear equality and inequality constraints using pdmm,''
\newblock {\em IEEE Trans. Signal Inf. Process. Netw.}, 2024.

\bibitem{pillutla2022robust}
K.~Pillutla, S.~M Kakade, and Z.~Harchaoui,
\newblock ``Robust aggregation for federated learning,''
\newblock {\em IEEE Trans. Signal Process.}, vol. 70, pp. 1142--1154, 2022.

\bibitem{wang2018differentially}
X.~Wang, J.~He, P.~Cheng, and J.~Chen,
\newblock ``Differentially private maximum consensus: Design, analysis and impossibility result,''
\newblock {\em IEEE Trans. Netw. Sci. Eng.}, vol. 6, no. 4, pp. 928--939, 2018.

\bibitem{venkategowda2020privacy}
N.~KD Venkategowda and S.~Werner,
\newblock ``Privacy-preserving distributed maximum consensus,''
\newblock {\em IEEE Signal Process. Lett.}, vol. 27, pp. 1839--1843, 2020.

\bibitem{gratton2021privacy}
C.~Gratton, N.~KD Venkategowda, R.~Arablouei, and S.~Werner,
\newblock ``Privacy-preserved distributed learning with zeroth-order optimization,''
\newblock {\em IEEE Trans. Inf. Forensics Secur.}, vol. 17, pp. 265--279, 2021.

\bibitem{cuff2016differential}
{P. Cuff and L. Yu},
\newblock ``Differential privacy as a mutual information constraint,''
\newblock in {\em Proc. 23rd ACM SIGSAC Conf. Comput. Commun. Secur.}, pp. 43--54, 2016.

\bibitem{Jane2020TIFS}
{Q. Li, J. S. Gundersen, R. Heusdens and M. G. Christensen},
\newblock ``Privacy-preserving distributed processing: Metrics, bounds, and algorithms,''
\newblock in {\em IEEE Trans. Inf. Forensics Secur.}, pp. 2090--2103, 2021.

\bibitem{duchi2013local}
{J. C. Duchi, M. I. Jordan, and M. J. Wainwright},
\newblock ``Local privacy and statistical minimax rates,''
\newblock in {\em Proc. IEEE Annu. Symp. Found. Comput. Sci.}, pp. 429--438, 2013.

\bibitem{kairouz2014extremal}
{P. Kairouz, S. Oh, and P. Viswanath},
\newblock ``Extremal mechanisms for local differential privacy,''
\newblock in {\em Adv. Neural Inf. Process. Syst.}, pp. 2879--2887, 2014.

\bibitem{li2022privacy}
Q.~Li, J.~S. Gundersen, K.~Tjell, R.~Wisniewski, and M.~G. Christensen,
\newblock ``Privacy-preserving distributed expectation maximization for gaussian mixture model using subspace perturbation,''
\newblock in {\em IEEE Proc. Int. Conf. Acoust., Speech, Signal Process.}, pp. 4263--4267, 2022.

\bibitem{cover2012elements}
{T. M. Cover and J. A. Tomas},
\newblock {\em Elements of information theory},
\newblock John Wiley \& Sons, 2012.

\bibitem{penrose2003random}
M.~Penrose,
\newblock {\em Random geometric graphs}, vol.~5,
\newblock OUP Oxford, 2003.

\bibitem{npeet}
G.~V. Steeg,
\newblock ``Npeet,'' \url{https://github.com/gregversteeg/NPEET}.

\end{thebibliography}
\end{sloppy}
\end{document}